\theoremstyle{plain}
\newtheorem{lemma}{Lemma}
\newtheorem{corollary}{Corollary}
\newtheorem{proposition}{Proposition}
\theoremstyle{definition}
\newtheorem{remark}{Remark}
\newtheorem{example}{Example}
\title{Scheduling unit processing time arc shutdown jobs to maximize network flow over time: complexity results}
\author{Natashia Boland \and Thomas Kalinowski \and Reena Kapoor \and Simranjit Kaur}
\date{}
\begin{document}

\maketitle

\begin{abstract}
We study the problem of scheduling maintenance on arcs of a capacitated network so as to maximize the total flow from a source node to a sink node over a set of time periods. Maintenance on an arc shuts down the arc for the duration of the period in which its maintenance is scheduled, making its capacity zero for that period. A set of arcs is designated to have maintenance during the planning period, which will require each to be shut down for exactly one time period. In general this problem is known to be NP-hard. Here we identify a number of characteristics that are relevant for the complexity of instance classes. In particular, we discuss instances with restrictions on the set of arcs that have maintenance to be scheduled; series parallel networks; capacities that are balanced, in the sense that the total capacity of arcs entering a (non-terminal) node equals the total capacity of arcs leaving the node; and identical capacities on all arcs.
\end{abstract}

\section*{Introduction}

Many real life systems can be viewed as a network with arc capacities, supporting the flow of a commodity. For example, transportation networks, or supply chains, may on occasion be viewed this way. We were motivated by a particular coal export supply chain~\cite{boland2010optimizinghvcc}, in which maximizing throughput is a key concern. Whilst this suggests a maximum flow model would be appropriate, in fact, the real network is not static: capacities change over time, and in particular, some arcs are shut down for maintenance at certain times. Often there is some flexibility in the time when maintenance jobs can be scheduled. Every maintenance schedule will incur some loss in the total throughput of the network. To obtain maximum throughput, it is important to select the schedule that leads to minimum loss of flow. For example consider the network in Figure~\ref{fig:example} with three nodes $\{s,v,t\}$, three arcs $\{a,b,c\}$ and given arc capacities.
\begin{figure}[htb]
  \centering
\includegraphics[width=.5\textwidth]{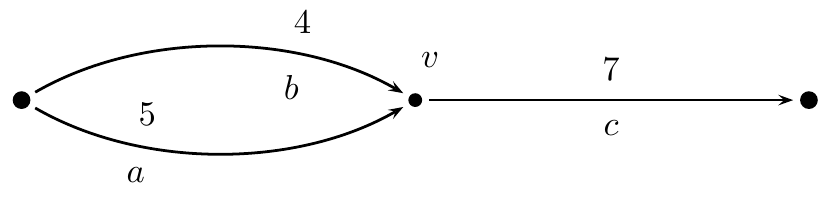}
\caption{Example network.}\label{fig:example}
\end{figure}
The total throughput possible in two time periods when no arcs are on maintenance is 14 units. Suppose that arc $a$ and $b$ have to go on maintenance for a unit period of processing time in a time horizon of two periods. The two possible schedules are either put both the arcs $a$ and $b$ on maintenance together in the first time period giving a total throughput of 7 units in two time periods or put the arc $a$ in first time period and arc $b$ in second time period giving the total throughput of 9 units in the two time periods. Clearly the second schedule is better than the first one as it is giving less loss of flow. This leads to a model in which arc maintenance jobs need to be scheduled so as to maximize the total flow in the network over time~\cite{boland2011optimisation,boland2012mixed,boland2012scheduling}.

In this paper we consider the case of this problem in which all maintenance jobs have unit processing time. The problem is defined over a network $N=(V,A,s,t,u)$ with node set $V$, arc set $A$, source $s\in V$, sink $t\in V$ and nonnegative integral capacity vector $u=(u_a)_{a\in A}$. Note that we permit parallel arcs, i.e. there may exist more than one arc in $A$ having the same start and end node, so $A$ is a multiset. By $\delta^-(v)$ and $\delta^+(v)$ we denote the set of incoming and outgoing arcs of node $v$, respectively. We consider this network over a set of $T$ time periods indexed by the set $[T]:=\{1,2,\ldots,T\}$, and our objective is to maximize the total flow from $s$ to $t$. In addition, we are given a subset $J\subseteq A$ of arcs that have to be shut down for exactly one time period in the time horizon. In other words, there is a set of maintenance jobs, one for each arc in $J$, each with unit processing time. Our optimization problem is to choose these outage time periods in such a way that the total flow from $s$ to $t$ is maximized. More formally, this can be written as a mixed binary program as follows:
\begin{align}
\max\ z = \sum_{i=1}^T&\left(\sum_{a\in\delta^+(s)}x_{ai}-\sum_{a\in\delta^-(s)}x_{ai}\right)  \label{eq:objective} \\
\text{s.t.}\qquad  x_{ai} &\leqslant u_a && a\in A\setminus J,\ i\in[T], \label{eq:cap_normal_arcs} \\
 x_{ai} &\leqslant u_ay_{ai} && a\in J,\ i\in[T], \label{eq:cap_job_arcs} \\
\sum_{i=1}^Ty_{ai} &= T-1 && a\in J, \label{eq:scheduling} \\
\sum_{a\in\delta^-(v)}x_{ai} &= \sum_{a\in\delta^+(v)}x_{ai} && v\in N\setminus\{s,t\},\ i\in[T], \label{eq:flow_conservation}\\
x_{ai} &\geqslant 0 && a\in A,\ i\in[T], \label{eq:nonnegative_flows}\\
y_{ai} &\in\{0,1\} && a\in J,\ i\in[T], \label{eq:binary_indicators}
\end{align}
where $x_{ai} \geqslant 0$ for $a \in A$ and $i \in [T]$ denotes the flow on arc $a$ in time period $i$, and $y_{ai} \in \{0,1\}$ for $a \in J$ and $i \in [T]$ indicates when the arc $a$ is {\em not} shut down for maintenance in time period $i$, i.e. $y_{ai}=0$ in the period $i$ in which the outage for arc $a$ is scheduled.

To the best of our knowledge, Boland et al.~\cite{boland2011optimisation,boland2012mixed,boland2012scheduling} initiated study on the problem with general processing times. In~\cite{boland2011optimisation,boland2012mixed}, the coal supply chain application, which has a number of additional side constraints, is modelled and solved using a rolling time horizon mixed integer programming approach. In~\cite{boland2012scheduling}, the complexity of the general problem is established, and four local search heuristics are developed and compared. We are not aware of any other studies on this problem. Several authors have studied dynamic network flows. For instance~\cite{ford1962flows} studied the problem of finding the maximum flow that can be sent from a source to a sink in $T$ time units, in a network with transit times on the arcs. Variations of the dynamic maximum flow problem with zero transit times are discussed in~\cite{fleischer2001universally},~\cite{hajek1984optimal}, and~\cite{hoppe1994polynomial}. None of these have a scheduling component. Machine scheduling problems have received a great deal of attention in the literature~\cite{pinedo2012scheduling}, but in the problem we study here, there is no underlying machine, and the association of jobs with network arcs and a maximum flow objective give it quite a different character. The closest work we can find is that in the recent paper of Tawarmalani and Li~\cite{tawarmalani2011multi}, which considers multiperiod maintenance scheduling over a network, in which the objective is based on multicommodity flows (with origin-destination demands, but without arc capacities), there is a limit on the number of arcs that can be shut down in any one period, and the network's structure is restricted to a tree. Complexity results are provided for linear networks, with a polynomial algorithm in the case of (nearly) uniform commodity demands, and a proof that the case of general demands is strongly NP-hard. Integer programming models are also considered, and polyhedral analysis carried out. The lack of previous attention to the trade-off between maintenance scheduling and network flow reduction in the literature is also noted in~\cite{tawarmalani2011multi}.

Our key contribution in this paper is an analysis of how the complexity of the problem depends on important characteristics: (i) the case that the set of arcs with a job contains a minimum cut of the network, (ii) {\em balanced} networks, in which the capacity into and out of each (non-terminal, i.e. transhipment) node is equal, (iii) networks that are series-parallel, (iv) the number of time periods is treated as a fixed parameter, and (v) the case that all arcs have the same capacity. We show for case (i) that it is optimal to schedule all jobs in the same time period, and that this is also true if the network is both balanced and series-parallel. However if the network is balanced but not necessarily series-parallel, then the problem is strongly NP-hard. We provide an approximation ratio for scheduling all jobs in the same time period in the general case, which shows this is asymptotically optimal as $T$ approaches infinity. For case (iv), we show that even if $T=2$ and the network contains only a single transhipment node, the problem is weakly NP-hard, and we give an algorithm for series-parallel networks that has pseudopolynomial complexity for $T$ fixed (but is exponential in $T)$. In case (v), if all arcs have the same capacity, we prove that the problem can be reduced to a maximum flow problem and $T$ additional linear programs, and hence can be solved in polynomial time. In this case it is not necessarily optimal to schedule all jobs at the same time.

The paper is organized as follows. Section~\ref{sec:single_node} contains a discussion of cases of the network with a single transhipment node. In Section~\ref{sec:general_networks} we explore general networks, and in Section \ref{sec:unit_cap} we consider the case that all arcs have the same capacity. Finally, in Section~\ref{sec:future} we suggest some future directions for study of this problem.

\section{Networks with single transhipment node}\label{sec:single_node}

The problem in general is NP-hard~\cite{boland2012scheduling}. In this proof, the reduction gave rise to a network with a single transhipment node, which was not balanced, and in which the set of arcs with associated jobs did not contain a minimum cut. This left open the complexity of the cases that all arcs in a minimum cut have an associated outage, or the network is balanced. This section gives a result that describes a class of networks with single transhipment node that covers the above-mentioned cases and is easy to resolve. Consider a network having only one transhipment node, say $v$. Let 
\begin{align*}
  C_1^- &= \sum_{a\in\delta^-(v)} u_a, & C_1^+ &= \sum_{a\in\delta^+(v)} u_a, \\
C_2^- &= \sum_{a\in\delta^-(v)\setminus J} u_a, & C_2^+ &= \sum_{a\in\delta^+(v)\setminus J} u_a.
\end{align*}
If all jobs are scheduled at the same time, say in time period 1, we obtain a total throughput of
\[\min\left\{C_2^-,\,C_2^+\right\}+(T-1)\min\left\{C_1^-,\,C_1^+\right\}.\]
On the other hand, using $\displaystyle\sum_{a\in\delta^-(v)\cap J}u_a= C_1^--C_2^-$ and $\displaystyle\sum_{a\in\delta^+(v)\cap J}u_a = C_1^+-C_2^+$ we obtain an upper bound of
\begin{multline*}
\min\left\{TC_2^-+(T-1)(C_1^--C_2^-),\,TC_2^++(T-1)(C_1^+-C_2^+)\right\}\\
=\min\left\{C_2^-+(T-1)C_1^-,\,C_2^++(T-1)C_1^+\right\}.
\end{multline*}
If (i) $C_1^-\leqslant C_1^+$ and $C_2^-\leqslant C_2^+$ or (ii) $C_1^+\leqslant C_1^-$ and $C_2^+\leqslant C_2^-$ then the upper bound equals the lower bound, and this proves the following sufficient optimality condition.
\begin{proposition}
If (i) $C_1^-\leqslant C_1^+$ and $C_2^-\leqslant C_2^+$ or (ii) $C_1^+\leqslant C_1^-$ and $C_2^+\leqslant C_2^-$, then it is optimal to schedule all jobs at the same time.\qedhere
\end{proposition}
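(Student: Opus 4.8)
The plan is to sandwich the optimal objective value between the throughput of the specific ``all jobs in period~1'' schedule and a schedule-independent cut bound, and then to verify that hypotheses (i) and (ii) force these two quantities to coincide. Both bounds are in fact already assembled in the discussion preceding the statement, so the proof reduces to a short case check, which I would present as follows.

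First I would recall the lower bound. The schedule that places every job in period~1 is feasible, since each job arc is then shut down for exactly one period as required by \eqref{eq:scheduling}. Because $v$ is the only transhipment node, in any single period the maximum $s$-$t$ flow is governed by the two cuts $\delta^-(v)$ and $\delta^+(v)$, and so equals the minimum of the available in- and out-capacity at $v$. In period~1 the job arcs are absent, giving $\min\{C_2^-,C_2^+\}$, while in each of the remaining $T-1$ periods all arcs are present, giving $\min\{C_1^-,C_1^+\}$. Hence the optimum is at least $\min\{C_2^-,C_2^+\}+(T-1)\min\{C_1^-,C_1^+\}$.

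Next I would establish the matching upper bound, and here lies the one real idea: the bound holds for \emph{every} feasible schedule, not merely the one above. Fixing the cut $\delta^-(v)$, the flow in any period is at most the capacity of that cut in that period, so the total flow over the horizon is at most the sum over the $T$ periods of the available capacity crossing $\delta^-(v)$. A non-job arc contributes $u_a$ in all $T$ periods, whereas a job arc is down in exactly one period by \eqref{eq:scheduling} and therefore contributes $u_a$ in exactly $T-1$ periods, \emph{independently of when its outage is placed}. Summing gives $T C_2^- + (T-1)(C_1^--C_2^-)=C_2^-+(T-1)C_1^-$, and the symmetric computation for $\delta^+(v)$ gives $C_2^++(T-1)C_1^+$; taking the smaller of the two cut bounds yields $z\le\min\{C_2^-+(T-1)C_1^-,\,C_2^++(T-1)C_1^+\}$.

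Finally I would close the gap by case analysis. Under hypothesis (i), the inequalities $C_1^-\le C_1^+$ and $C_2^-\le C_2^+$ make the ``$-$'' side attain both minima, so the lower bound becomes $C_2^-+(T-1)C_1^-$, and the same two inequalities place the upper bound's minimum on the ``$-$'' side as well, so the bounds coincide; hypothesis (ii) is identical after interchanging the roles of $\delta^-(v)$ and $\delta^+(v)$. Since the lower and upper bounds agree, the ``all jobs in period~1'' schedule is optimal. The only step that requires genuine care is the schedule-independence of the upper bound (that each job removes exactly one period of capacity regardless of its timing); the remainder is bookkeeping on the constants $C_1^\pm$ and $C_2^\pm$.
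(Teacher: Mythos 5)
Your proof is correct and follows essentially the same route as the paper: the same lower bound from scheduling all jobs in period~1, the same schedule-independent cut bound $\min\{C_2^-+(T-1)C_1^-,\,C_2^++(T-1)C_1^+\}$, and the same case check showing the two coincide under (i) or (ii). The only difference is that you spell out the schedule-independence of the upper bound more explicitly, which the paper leaves implicit.
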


As a simple consequence we note that in the following situations it is optimal to schedule all jobs at the same time:
\begin{itemize}
\item $C_1^-=C_1^+$, so the network is balanced, or
\item $C_1^-\leqslant C_1^+$ and $J\supseteq\delta^-(v)$, or
\item $C_1^+\leqslant C_1^-$ and $J\supseteq\delta^+(v)$.
\end{itemize}
For a time horizon of two time periods the problem asks for a partition of the job set $J=J_1\cup J_2$ into two parts such that the total flow is maximized, i.e. we want to find
\[\max\limits_{J_1\cup J_2=J}\left[\min\left\{\sum_{a\in\delta^-(v)\setminus J_1}u_a,\,\sum_{a\in\delta^+(v)\setminus J_1}u_a\right\}+\min\left\{\sum_{a\in\delta^-(v)\setminus J_2}u_a,\,\sum_{a\in\delta^+(v)\setminus J_2}u_a\right\}\right].\]
The following proposition shows that it is NP-hard to decide if the trivial partition $J_1=J$ and $J_2=\varnothing$ is optimal.
\begin{proposition}\label{prop:single_node}
 For a network with one transhipment node and a time horizon of two periods it is NP-hard to decide if it is optimal to schedule all jobs at time 1.
\end{proposition}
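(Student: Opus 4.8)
The plan is to prove NP-hardness by reduction from a known NP-hard problem. Since the objective for $T=2$ is
\[\max_{J_1\cup J_2=J}\left[\min\left\{\sum_{a\in\delta^-(v)\setminus J_1}u_a,\,\sum_{a\in\delta^+(v)\setminus J_1}u_a\right\}+\min\left\{\sum_{a\in\delta^-(v)\setminus J_2}u_a,\,\sum_{a\in\delta^+(v)\setminus J_2}u_a\right\}\right],\]
and we want to decide whether the trivial partition $J_1=J$, $J_2=\varnothing$ achieves this maximum, the natural source of hardness is the number-partitioning behaviour hidden in the $\min$ terms. I would reduce from \textsc{Partition}: given positive integers $s_1,\dots,s_n$ with $\sum_i s_i=2S$, decide whether there is an index set $I$ with $\sum_{i\in I}s_i=S$. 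The aim is to build a single-transhipment-node network in which splitting the jobs across the two periods beats scheduling them all at time 1 \emph{if and only if} a balanced partition exists.

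First I would design the gadget. The idea is to put all the maintenance jobs on one side of the node (say as incoming arcs) so that the two $\min$ operations are controlled by the interplay between the job arcs on that side and a fixed total capacity on the other side. Concretely, I would create $n$ incoming job arcs with capacities $u_1=s_1,\dots,u_n=s_n$, plus possibly one non-job incoming arc and a non-job outgoing arc whose capacities are chosen so that the outgoing capacity $C_1^+$ sits strictly between what can be achieved by the all-at-once schedule and what a perfectly balanced split would achieve. The key design requirement is that when we schedule all jobs at time 1 we lose a lot of incoming capacity in period 1 (so the throughput in that period is small), whereas distributing the jobs lets each period retain roughly half the job capacity, and the per-period throughput is then capped by $C_1^+$; the total over two periods is maximized precisely when the two halves are as equal as possible, which is exactly the \textsc{Partition} condition. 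I would then show that the objective value of the distributed schedule strictly exceeds that of the trivial all-at-time-1 schedule if and only if the instance admits an exact balanced partition.

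The main obstacle is getting the arithmetic of the $\min$ functions to align so that the ``trivial partition is optimal'' question coincides exactly with the ``no balanced partition exists'' question, with no off-by-one slack and no degenerate cases where an unbalanced split accidentally ties or beats the balanced one. In particular I must choose the fixed opposite-side capacity $C_1^+$ and any auxiliary non-job capacities so that: (a) each $\min$ is governed by the job-bearing side rather than the fixed side in the relevant regime, and (b) the total throughput as a function of the split $(\sum_{i\in I}s_i,\,2S-\sum_{i\in I}s_i)$ is a strictly concave function of the split, maximized uniquely at the even split. Verifying this concavity/threshold behaviour, and checking that the all-jobs-at-time-1 value equals the distributed optimum exactly when $\max_I\min\{\sum_{i\in I}s_i,2S-\sum_{i\in I}s_i\}<S$, is the delicate part.

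Finally I would confirm the reduction is polynomial (the network has $O(n)$ arcs and the capacities are the input integers, so its size is polynomial in the \textsc{Partition} instance) and that the decision version matches: the trivial schedule is optimal if and only if the \textsc{Partition} instance is a no-instance. This establishes that deciding optimality of scheduling all jobs at time 1 is NP-hard, as claimed.
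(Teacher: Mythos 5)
You have the right source problem (\textsc{Partition}) and the right target statement (all jobs at time~1 is optimal if and only if the instance is a no-instance), which is indeed what the paper proves, but the gadget you commit to cannot deliver it, and the step you defer as ``the delicate part'' is precisely where it breaks. Suppose, as you propose, that every job lies on the incoming side of $v$, the non-job incoming capacity is $\alpha\geqslant 0$ and the non-job outgoing capacity is $\beta$. Writing $x$ for the total capacity of the jobs scheduled in period~1, the objective of the schedule is
\[
f(x)=\min\{\alpha+2S-x,\,\beta\}+\min\{\alpha+x,\,\beta\},
\]
a concave function of $x$ that is symmetric about $x=S$. Hence $x=2S$ (all jobs at time~1) is a global \emph{minimizer} of $f$ over $[0,2S]$. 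Concavity then forces a dichotomy: either $f$ is constant on $[0,2S]$ (which happens iff $\beta\leqslant\alpha$ or $\beta\geqslant\alpha+2S$), in which case all-at-time-1 is trivially optimal, or else $f(s_i)>f(2S)$ for \emph{every} single element $s_i$, so moving any one job to period~2 already strictly improves the objective. In both cases the answer to ``is all-at-time-1 optimal?'' is determined by $\alpha$, $\beta$ and $S$ alone and is independent of whether a balanced partition exists; no tuning of the auxiliary capacities escapes this, so the reduction fails.

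The idea you are missing is that jobs must sit on \emph{both} sides of the transhipment node. The paper puts a job on every arc except a single outgoing arc; then scheduling everything at time~1 lets the incoming and outgoing capacity losses overlap inside one $\min$, yielding total flow $4B-1$, exactly one unit below the overall upper bound of $4B$, and reaching $4B$ forces a flow of $2B$ in each period, which is possible precisely when the $d_i$ admit a balanced partition. To turn your sketch into a proof you would need to redesign the gadget with job arcs on both sides and then actually carry out the capacity arithmetic you postponed.
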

\begin{proof}
Reduction from \textsc{Partition} (see~\cite{garey1979computers}). An instance is given by a set $D=\{d_1,\ldots,d_m\}$ of positive integers with $\sum_{i=1}^m d_i=2B$, and the problem is to decide if there is a partition $D=D_1\cup D_2$ such that $\sum_{d\in D_1} d=\sum_{d\in D_2} d = B$. We consider the network shown in Figure~\ref{fig:partition_network} where every arc except the bold arc from $v$ to $t$ has an associated job.
\begin{figure}[htb]
  \centering
\includegraphics[width=.6\textwidth]{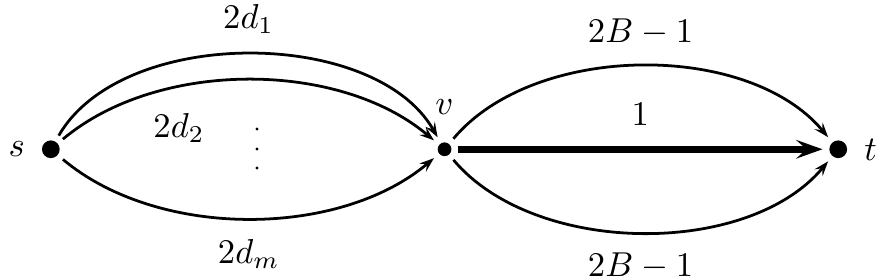}
\caption{The network for the reduction from \textsc{Partition}. Arcs are labeled with capacities.}\label{fig:partition_network}
\end{figure}
Scheduling all jobs at time 1 gives a total flow of $4B-1$. A total flow of $4B$ is possible if and only if there is a flow of $2B$ in each time period, and this is equivalent to a positive solution for the \textsc{Partition} instance.
\end{proof}

The reduction from \textsc{Partition} suggests the use of dynamic programming to obtain a pseudopolynomial algorithm for the single node problem. This is indeed possible, and in fact can be done more generally for series-parallel networks. This more general approach is presented in the next section (see Corollary~\ref{cor:series_parallel_fixed_time}).

\section{General Networks}\label{sec:general_networks}
In this section we explore complexity issues for networks with more than one transhipment node and also discuss some of its tractable subclasses. We start with a lemma generalizing the upper bound in the single node case.
\begin{lemma}\label{lem:upper_bound}
Let $S\subseteq A$ be any $s$-$t$ cut in the network. The objective value for problem (\ref{eq:objective})~--~(\ref{eq:binary_indicators}) is bounded above by
\[T\sum_{a\in S\setminus J} u_a+(T-1)\sum_{a\in S\cap J} u_a.\]
\end{lemma}
\begin{proof}
Since $S$ is a cut, the total flow over the whole time horizon is bounded above by
\[  \sum_{i=1}^T\sum_{a\in S} x_{ai}=\sum_{a\in S}\sum_{i=1}^T x_{ai}
    = \sum_{a\in S\setminus J} \sum_{i=1}^T x_{ai} + \sum_{a\in S\cap J} \sum_{i=1}^T x_{ai}
    \leqslant \sum_{a\in S\setminus J} Tu_a + \sum_{a\in S\cap J}
    (T-1) u_a,
    \]
by the combination of (\ref{eq:cap_normal_arcs})~---~(\ref{eq:scheduling}). The result follows. \qedhere
\end{proof}

As an immediate consequence we obtain that the problem is tractable when the set of arcs that have to undergo maintenance contains a minimum cut.
\begin{proposition}\label{prop:all_arcs}
If $J$ contains a minimum cut $S$ of the network then it is optimal to schedule all jobs at the same time.
\end{proposition}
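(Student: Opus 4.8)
The plan is to show that the simple schedule which places all maintenance jobs in a single time period already attains the upper bound furnished by Lemma~\ref{lem:upper_bound}, and is therefore optimal. Write $S\subseteq J$ for the minimum cut that $J$ contains, and let $F=\sum_{a\in S}u_a$ denote its capacity; by the max-flow--min-cut theorem $F$ equals the value of a maximum $s$-$t$ flow in the static network (with no arcs under maintenance).

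First I would apply Lemma~\ref{lem:upper_bound} to the cut $S$. The point is that the hypothesis $S\subseteq J$ makes the bound collapse: we have $S\setminus J=\varnothing$ and $S\cap J=S$, so the general bound
\[
T\sum_{a\in S\setminus J}u_a+(T-1)\sum_{a\in S\cap J}u_a
\]
reduces to $(T-1)\sum_{a\in S}u_a=(T-1)F$. Thus no schedule can achieve total throughput exceeding $(T-1)F$.

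It then remains to exhibit a schedule meeting this bound. I would take the schedule that puts every job in time period $1$ and leaves periods $2,\dots,T$ free of maintenance. In each of those $T-1$ periods all arcs are available, so each can carry a full maximum flow of value $F$, giving a total of at least $(T-1)F$ (the contribution of period $1$, in which the entire cut $S\subseteq J$ is shut down, is irrelevant since the remaining periods already attain the bound). Combining this lower bound with the upper bound from the previous step shows the schedule is optimal, proving the proposition.

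There is essentially no serious obstacle here; the whole argument hinges on the single observation that containment of a minimum cut in $J$ forces the $S\setminus J$ term in Lemma~\ref{lem:upper_bound} to vanish, after which the achievability direction is immediate from max-flow--min-cut. The only point warranting a line of care is the appeal to the max-flow--min-cut theorem to identify $\sum_{a\in S}u_a$ with the attainable per-period flow in the maintenance-free periods.
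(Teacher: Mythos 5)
Your proof is correct and follows exactly the paper's argument: apply Lemma~\ref{lem:upper_bound} to the minimum cut $S\subseteq J$ to get the upper bound $(T-1)\sum_{a\in S}u_a$, then observe that scheduling all jobs in one period leaves $T-1$ maintenance-free periods each carrying the maximum flow $\sum_{a\in S}u_a$ by max-flow--min-cut. The paper's proof is just a one-sentence version of the same reasoning.
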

\begin{proof}
 Since $S$ is a minimum cut, the maximum flow in any period in which no maintenance is scheduled is $\sum_{a\in S} u_a$, so scheduling all jobs at time 1 gives a total flow of $(T-1)\sum_{a\in S} u_a$, which achieves the upper bound from Lemma~\ref{lem:upper_bound}.
\end{proof}

In Section \ref{sec:single_node}, we showed that the case of single-node networks with balanced capacities is easy. The following theorem shows that the balanced property alone is not enough. 
\begin{proposition}\label{prop:balanced_is_hard}
  The problem is strongly NP-hard for balanced networks.
\end{proposition}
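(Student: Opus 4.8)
The plan is to establish strong NP-hardness by a reduction from \textsc{3-Partition}, which is strongly NP-hard~\cite{garey1979computers}: given positive integers $a_1,\dots,a_{3m}$ with $\sum_i a_i = mB$ and $B/4 < a_i < B/2$ for all $i$, decide whether the index set can be partitioned into $m$ triples each summing to $B$. I would take a time horizon of exactly $T=m$ periods, so that the periods play the role of the $m$ target sets, and encode ``item $i$ is placed in set $j$'' as ``the job on the arc for item $i$ is scheduled in period $j$''. Since each job is shut down in exactly one period, any schedule induces a partition $S_1,\dots,S_m$ of the item set, and I want the total flow to be maximised precisely when every $S_j$ sums to $B$.

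The network I would build has a source $s$, a sink $t$, and two transhipment nodes $v$ and $d$. For each $i$ there is an arc $e_i$ from $s$ to $v$ of capacity $a_i$ carrying a job, so the total capacity into $v$ is $mB$ and $J=\{e_1,\dots,e_{3m}\}$. A single bottleneck arc runs from $v$ to $t$ with capacity $(m-1)B$ and carries no job. To make the network balanced I would add a balancing arc from $v$ to $d$ of capacity $B$ together with an arc from $d$ back to $s$ of capacity $B$. Then the capacity into $v$ equals $mB$ and the capacity out of $v$ equals $(m-1)B+B=mB$, while $d$ has capacity $B$ in and $B$ out, so every internal node is balanced. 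Note that the unique arc into $t$ has capacity $(m-1)B$, whence $\{v\to t\}$ is a minimum cut, and crucially it is not contained in $J$, so Proposition~\ref{prop:all_arcs} does not apply.

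The key computation is the maximum $s$-$t$ flow in a single period $j$ once the items $S_j$ scheduled in that period have capacity zero. Writing $\sigma_j=\sum_{i\in S_j}a_i$, the available capacity into $v$ is $mB-\sigma_j$; since the only arc reaching $t$ has capacity $(m-1)B$ and the balancing arc merely returns flow to $s$ and so carries no useful flow, the maximum flow in period $j$ equals $\min\{(m-1)B,\;mB-\sigma_j\}$. Summing over periods and using $\sum_j\sigma_j=mB$, each term is at most $(m-1)B$, so the total flow is at most $m(m-1)B$; this bound is also immediate from Lemma~\ref{lem:upper_bound} applied to the cut $\{v\to t\}$. Equality holds if and only if $\sigma_j\le B$ for every $j$, and since the $\sigma_j$ are nonnegative and sum to $mB$ over $m$ periods this forces $\sigma_j=B$ for all $j$; the condition $B/4<a_i<B/2$ then forces each $S_j$ to be a triple. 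Hence the optimal total flow equals $m(m-1)B$ if and only if the \textsc{3-Partition} instance is solvable, and since all capacities are bounded by $mB$ the reduction is polynomial and preserves strong NP-hardness.

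The step I expect to be delicate is reconciling the balanced requirement with the need for a genuine loss mechanism. The natural gadget, which funnels capacity $mB$ of item arcs into a node whose forward capacity to $t$ is only $(m-1)B$, is not balanced, and any attempt to restore balance by forwarding the surplus $B$ to $t$ would raise the effective throughput to $mB$ and destroy the dependence on the schedule. The trick I would rely on is to absorb exactly the surplus $B$ by an arc that routes back to the source through $d$: it makes every internal node balanced yet carries no flow in any optimal solution, so the per-period throughput stays capped at $(m-1)B$. I would also take care to use two transhipment nodes rather than one, so that the construction falls outside the single-transhipment-node case of Section~\ref{sec:single_node}, where balancedness already guarantees that scheduling all jobs together is optimal.
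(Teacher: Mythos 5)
Your reduction is correct, and it follows the same high-level strategy as the paper (reduction from \textsc{3-Partition} with $T=m$, item arcs of capacity $a_i$ carrying the jobs, and a job-free bottleneck of capacity $(m-1)B$ so that exactly $B$ of item capacity must be shut down in each period, with $B/4<a_i<B/2$ forcing triples). Where you genuinely differ is in the gadget that restores balance. The paper keeps the network acyclic: it routes the surplus $B$ from the item node $v_1$ to a second transhipment node $v_2$, which also receives a direct arc $(s,v_2)$ of capacity $(m-1)B$ and feeds $t$ through $m$ parallel capacity-$B$ arcs each carrying a job; a chain of tightness arguments (the cut at $s$, then flow conservation at $v_2$, then the cut at $t$, then conservation at $v_1$) shows the surplus arc must carry zero flow in any schedule attaining the bound $2m(m-1)B$ of Lemma~\ref{lem:upper_bound}. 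You instead absorb the surplus with a capacity-$B$ cycle $v\to d\to s$ back to the source; since the objective is the \emph{net} flow out of $s$ and conservation at $v$ and $d$ forces the flow on $(v,t)$ to equal the net outflow, the cycle can never carry useful flow and the per-period maximum is the closed form $\min\{(m-1)B,\,mB-\sigma_j\}$ with no case analysis. Your version is shorter and the correctness argument is essentially one line; the formulation in (\ref{eq:objective})--(\ref{eq:binary_indicators}) explicitly allows arcs into $s$, so the construction is legitimate. What the paper's construction buys in exchange for its extra bookkeeping is that the hard instances are acyclic (indeed all arcs point ``forward''), so it establishes the slightly stronger statement that the problem remains strongly NP-hard for balanced \emph{acyclic} networks, which your cyclic gadget does not give.
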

\begin{proof}
Reduction from \textsc{3-Partition} (see~\cite{garey1979computers}). A \textsc{3-Partition} instance is given by an integer $B$ and a set $\{d_1,\ldots,d_{3m}\}$ of integers with $B/4<d_i<B/2$ for all $i$ and $\sum_{i=1}^{3m} d_i=mB$. The problem is to decide if there is a partition of the set $\{d_1,\ldots,d_{3m}\}$ into $m$ triples such that the sum of each triple equals $B$. Consider the network shown in Figure \ref{fig:3partition_network}, where the arc labels indicate capacities, and the bold arcs don't have jobs associated with them. Also let the time horizon be $T=m$.
\begin{figure}[htb]
  \centering
\includegraphics[width=.8\textwidth]{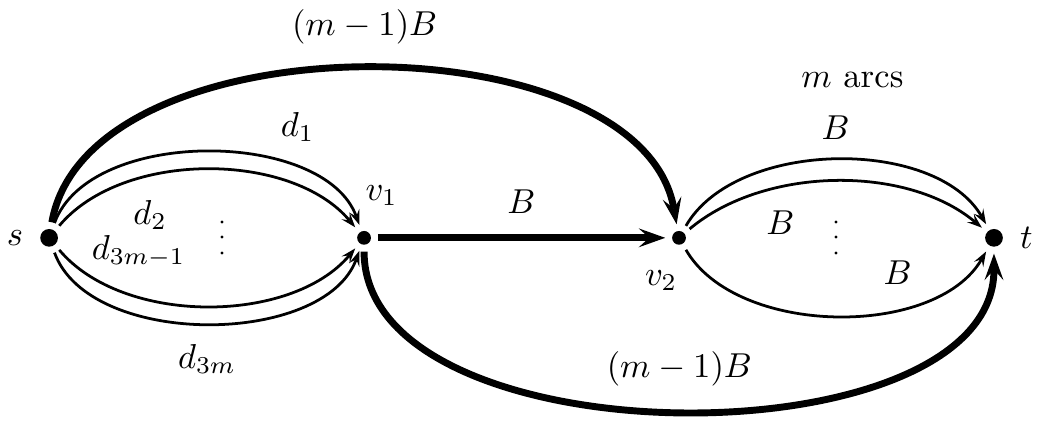}
\caption{The network for the reduction from \textsc{3-Partition}.}\label{fig:3partition_network}
\end{figure}
By Lemma \ref{lem:upper_bound} applied to the cut $(\{s\},\{v_1, v_2,t\})$, the total flow is bounded by 
\[T(m-1)B+(T-1)\sum_{i=1}^{3m} d_i=2m(m-1)B.\]
To achieve the bound $2m(m-1)B$ the arc $(s,v_2)$ is at capacity in every time period. This implies that we have to schedule exactly one job on the arcs between $v_2$ and $t$ in each time period. Now flow conservation in node $v_2$ implies that the flow on the arc $(v_1,v_2)$ is zero in every time period. Considering the cut $(\{s,v_1,v_2\},\{t\})$ the bound $2m(m-1)B$ can be achieved only if the arc $(v_1,t)$ is at capacity in every time period. Using flow conservation in node $v_1$ we can now conclude that in order to achieve the bound $2m(m-1)B$ it is necessary and sufficient to send in each time period $(m-1)B$ units of flow from $s$ to $v_1$, and this can be done if and only if the answer for the \textsc{3-Partition} instance is YES.
\end{proof}

As already mentioned, not all instances of the general balanced network are hard. The single-node variant is easy and is in fact a special case of a series-parallel network. Note that the network constructed in the above NP-hardness proof is not series-parallel. We show below (Proposition \ref{prop:series_parallel_balanced}) that indeed the case of series-parallel balanced networks is easy. However we first make precise our definition of series-parallel. Throughout this paper, by \emph{series-parallel network} we mean a \emph{two-terminal series-parallel network}: a network that has a single source and single sink and is constructed by a sequence of series and parallel compositions starting from single arcs. For two networks $N_1$ and $N_2$  the  \emph{parallel composition} of $N_1$ and $N_2$ is obtained by identifying the source node $s_1$ and sink node $t_1$ of $N_1$ with the source node $s_2$ and sink node $t_2$ of $N_2$, respectively. The \emph{series composition} of $N_1$ and $N_2$ is obtained by identifying the sink node $t_1$ of $N_1$ with the source node $s_2$ of $N_2$. We denote these compositions by $N_1\oplus_P N_2$ and $N_1\oplus_S N_2$, respectively. The next proposition shows that series-parallel balanced networks are tractable.

\begin{proposition}\label{prop:series_parallel_balanced}
  If the network is series-parallel and balanced then it is optimal to schedule all jobs at the same time.
\end{proposition}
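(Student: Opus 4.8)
The plan is to sandwich the optimum between the value of the all-at-once schedule and the cut upper bound of Lemma~\ref{lem:upper_bound}, and to show the two coincide by induction on the series-parallel decomposition. Scheduling every job in period~$1$ yields the job-removed max flow $g$ in period~$1$ and the full max-flow value $f$ in each of the remaining $T-1$ periods (with the schedule fixed, the periods decouple into independent max-flow problems), for a total of $(T-1)f+g$, where $f$ is the max flow in $N$ and $g$ is the max flow in the network obtained by zeroing the job arcs. Since this schedule is feasible, the optimum is at least $(T-1)f+g$; by Lemma~\ref{lem:upper_bound}, minimised over cuts, it is at most $h(N):=\min_S[(T-1)u(S)+u(S\setminus J)]$. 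As $h(N)\ge (T-1)f+g$ always holds, it suffices to prove the reverse inequality, i.e.\ the identity $h(N)=(T-1)f(N)+g(N)$.

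I would prove this identity by induction over the construction of $N$ from single arcs. The three quantities all obey the same series-parallel recursions: $f$, $g$ and $h$ are each a weighted minimum cut --- $h$ with arc weights $Tu_a$ on ordinary arcs and $(T-1)u_a$ on job arcs --- so each is additive under parallel composition and takes a minimum under series composition, that is $\phi(N_1\oplus_P N_2)=\phi(N_1)+\phi(N_2)$ and $\phi(N_1\oplus_S N_2)=\min\{\phi(N_1),\phi(N_2)\}$ for $\phi\in\{f,g,h\}$. The base case of a single arc is a one-line check in the two cases $a\in J$ and $a\notin J$. For the inductive step I first note that balance is inherited by the two factors of either composition: every non-terminal node of a factor is non-terminal in $N$, hence balanced, and a single arc is vacuously balanced.

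The parallel step is then immediate, as the target identity is preserved under addition. The series step is the crux, and it is where balance does the real work. For a balanced network, saturating every arc is a feasible flow (flow conservation at each transhipment node is precisely the balance condition), so $f(N)=\sum_{a\in\delta^+(s)}u_a=\sum_{a\in\delta^-(t)}u_a$; thus $f$ equals both the source and the sink capacity. Applying this to a series composition $N=N_1\oplus_S N_2$, balance at the join node $w$ forces the capacity into $w$ (the sink capacity of $N_1$, namely $f(N_1)$) to equal the capacity out of $w$ (the source capacity of $N_2$, namely $f(N_2)$); hence $f(N_1)=f(N_2)$. Consequently the two $(T-1)f$ terms coincide, so $\min\{(T-1)f(N_1)+g(N_1),\,(T-1)f(N_2)+g(N_2)\}=(T-1)f(N_1)+\min\{g(N_1),g(N_2)\}$, which by the recursions equals $(T-1)f(N)+g(N)$, completing the induction.

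The main obstacle to anticipate is exactly this series case: a priori $\min\{(T-1)f_1+g_1,(T-1)f_2+g_2\}$ need not equal $(T-1)\min\{f_1,f_2\}+\min\{g_1,g_2\}$ unless the two minimisations are attained by the same factor, and without balance one can have the full-flow bottleneck and the job-removed bottleneck sit in different factors, making the all-at-once schedule strictly suboptimal. What rescues the argument is the observation that balance forces $f(N_1)=f(N_2)$ at every series join, so the $f$-minimisation is a tie and the $g$-minimisation is free to be resolved independently. I would therefore take particular care to verify the two facts the induction rests on --- that balance is inherited by series-parallel factors, and that $f$, $g$ and $h$ each satisfy the stated (sum for parallel, minimum for series) minimum-cut recursions.
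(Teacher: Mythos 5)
Your proof is correct, but it follows a genuinely different route from the paper's. The paper invokes the characterization from \cite{boland2012scheduling} that scheduling all jobs simultaneously is optimal if and only if $F_{N,J\cup J'}+F_{N,\varnothing}\geqslant F_{N,J}+F_{N,J'}$ for all $J,J'\subseteq A$, and then proves this supermodularity inequality by induction on the series-parallel decomposition; you bypass that cited equivalence and show directly that the all-at-once value $(T-1)f+g$ meets the cut upper bound of Lemma~\ref{lem:upper_bound} minimized over all cuts, again by induction on the decomposition. The two inductions use balance at exactly the same point and in essentially the same way: in the series step the paper needs $F_{N,\varnothing}=F_{N_1,\varnothing}=F_{N_2,\varnothing}$, and you need $f(N_1)=f(N_2)$, both of which follow from your observation that in a balanced network the saturating flow is feasible, so the full max-flow values of the two series factors are tied and the minimum over the two factors can then be split term by term. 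What your version buys is self-containedness (no appeal to the external equivalence) together with an explicit min-cut certificate of optimality; what the paper's version buys is the stronger structural fact that the supermodular inequality holds for every pair of job subsets, not just the particular $J$ at hand. Your supporting claims all check out: the periods decouple into independent max-flow problems once the schedule is fixed; $f$, $g$ and $h$ are all weighted min cuts (with weights $u_a$, with $J$-arcs zeroed, and with weights $Tu_a$ or $(T-1)u_a$ respectively) and hence are additive under parallel composition and minimum-taking under series composition; and balance is inherited by the factors since every transhipment node of a factor is a transhipment node of $N$ with the same incident arcs.
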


\begin{proof}
For a network $N=(V,A,s,t,u)$ and a subset $J\subseteq A$ let $F_{N,J}$ denote the maximum flow value in the network $N=(V,A\setminus J,s,t,u\mid_{A\setminus J})$. The statement that it is optimal to schedule all jobs at the same time is equivalent to
\[F_{N,J\cup J'}+F_{N,\varnothing} \geqslant F_{N,J} + F_{N,J'}\]
for all $J,J'\subseteq A$ (see~\cite{boland2012scheduling}). We prove the proposition by induction on the structure of the graph. The claim holds for the base case of a single arc. So assume that $N$ is a series-parallel network that is not a single arc. Then $N=N_1\oplus_P N_2$ or $N=N_1\oplus_S N_2$ for some smaller networks $N_i=(V_i,A_i,s_i,t_i,u_i)$ ($i\in\{1,2\}$), and by induction
\[F_{N_i,J_i\cup J_i'}+F_{N_i,\varnothing} \geqslant F_{N_i,J_i} + F_{N,J_i'}\]
for all $J_i,J'_i\subseteq A_i$. Now let $J,J'\subseteq A=A_1\cup A_2$ be arbitrary and put $J_i=J\cap A_i$ and $J'_i=J'\cap A_i$ for $i\in\{1,2\}$.
\begin{description}
\item[Case 1.] $N=N_1\oplus_P N_2$. Then
  \begin{multline*}
    F_{N,J\cup J'}+F_{N,\varnothing} = F_{N_1,J_1\cup J'_1}+F_{N_1,\varnothing} +
    F_{N_2,J_2\cup J'_2}+F_{N_2,\varnothing} \\ \geqslant  F_{N_1,J_1}+F_{N_1,J'_1} + F_{N_2,J_2}+F_{N_2,J'_2}  = F_{N,J} + F_{N,J'}.
  \end{multline*}
\item[Case 2.] $N=N_1\oplus_S N_2$. By the assumption that $N$ is balanced, we have $F_{N,\varnothing}=F_{N_1,\varnothing}=F_{N_2,\varnothing}$, and we denote this common value by $F$. Now
  \begin{multline*}
    F_{N,J\cup J'}+F=\min\{F_{N_1,J_1\cup J'_1},\,F_{N_2,J_2\cup J'_2}\}+F \geqslant \min\{F_{N_1,J_1}+F_{N_1,J'_1},\,F_{N_2,J_2}+F_{N_2,J'_2}\}\\
\geqslant\min\{F_{N_1,J_1},\,F_{N_2,J_2}\}+\min\{F_{N_1,J'_1},\,F_{N_2,J'_2}\} = F_{N,J} + F_{N,J'}.\qedhere
  \end{multline*}
\end{description}
\end{proof}

Since scheduling all jobs in the same period seems to be optimal in some cases, we now ask how well it performs as an approximation algorithm in the general case. 
\begin{proposition}
\label{prop:approx} Scheduling all jobs in the same period gives an approximation ratio no less than $\frac{(T-1)}{T}$.
\end{proposition}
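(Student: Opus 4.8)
The plan is to sandwich the optimal value between two multiples of the ordinary maximum-flow value and then take the quotient. Write $F:=F_{N,\varnothing}$ for the maximum $s$--$t$ flow in the network when no arc is on maintenance, and let $z^\ast$ denote the optimal objective value of (\ref{eq:objective})--(\ref{eq:binary_indicators}).

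First I would establish the upper bound $z^\ast\leqslant TF$. This is immediate from Lemma~\ref{lem:upper_bound} applied to a minimum cut $S$: dropping the nonnegative term $(T-1)\sum_{a\in S\cap J}u_a$ from the bound and using $\sum_{a\in S\setminus J}u_a\leqslant\sum_{a\in S}u_a=F$ gives $z^\ast\leqslant T\sum_{a\in S}u_a=TF$. (One may also argue directly that the flow in each of the $T$ periods is at most $F$.)

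Next I would lower-bound the value of the schedule that places every job in period~$1$. In that period the arcs of $J$ have zero capacity, so the achievable flow is $F_{N,J}$, while in each of the remaining $T-1$ periods all arcs are available and the flow $F$ is achievable; since the periods decouple, this schedule attains $F_{N,J}+(T-1)F\geqslant(T-1)F$, using $F_{N,J}\geqslant 0$.

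Combining the two bounds yields
\[\frac{F_{N,J}+(T-1)F}{z^\ast}\geqslant\frac{(T-1)F}{TF}=\frac{T-1}{T},\]
which is the claimed ratio. The argument is essentially immediate; the only point to watch is the degenerate case $F=0$, in which the heuristic and the optimum both give zero flow and the statement holds trivially. The main conceptual content is simply the recognition that $(T-1)F$ and $TF$ are the right bracketing quantities, and the resulting ratio makes precise the earlier observation that scheduling all jobs together is asymptotically optimal as $T\to\infty$.
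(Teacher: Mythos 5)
Your proof is correct and follows essentially the same route as the paper: bound the all-in-one-period schedule below by $(T-1)F$, bound the optimum above by $TF$, and take the quotient. The extra details you supply (deriving $z^*\leqslant TF$ from Lemma~\ref{lem:upper_bound} and handling $F=0$) are fine but not needed beyond what the paper already states.
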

\begin{proof} Let $z^*$ denote the optimal value and $\tilde{z}$ denote the throughput obtained by scheduling all arcs in the same period. Clearly $\tilde{z} \geqslant (T-1)F$ and $z^* \leqslant TF$, so
\[\frac{\tilde{z}}{z^*} \geqslant \frac{T-1}{T}.\qedhere\]
\end{proof}
Thus scheduling all jobs in the same period is asymptotically optimal in the sense that the approximation ratio approaches 1 as $T$ tends to infinity. In general the analysis in the proof of Proposition \ref{prop:approx} is tight as can be seen by considering the network in Figure \ref{fig:worstapprox} where all arcs have unit capacity and the set $J$ of arcs with a job is the set of the two arcs from $s$ to $v$. Then scheduling both outages at the same time yields a total throughput of $T-1$ while for $T\geqslant 2$ the outages can be scheduled in different time periods which yields a total throughput of $T$, and the approximation ratio in this case is $(T-1)/T$.
 \begin{figure}[htb]
  \centering
\includegraphics[width=.45\textwidth]{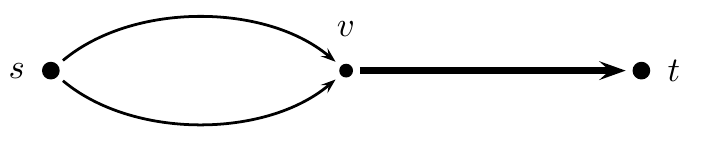}
\caption{A network where the bound of Proposition~\ref{prop:approx} is tight.}\label{fig:worstapprox}
\end{figure}
For certain instances the analysis of the approximation ratio can be slightly improved. For this let
\begin{align*}
L &= (T-1)F+\min_{S\in\mathcal S}\sum_{a\in S\setminus J}u_a,&
U &= \min\limits_{S\in\mathcal S}\left(T\sum_{a\in S}u_a-\sum_{a\in S\cap J}u_a\right)
\end{align*}
where the minima are over the the set $\mathcal S$ of all $s$-$t$-cuts in the network. Clearly, $L$ is the objective value for scheduling all jobs in the same time period, and from Lemma \ref{lem:upper_bound} it follows that $U$ is an upper bound for the optimal objective value. Thus $L/U$ is a lower bound for the approximation ratio, and since $L\geqslant (T-1)F$ and $U\leqslant TF$ this is at least as good as the bound from Proposition \ref{prop:approx}. Note that this generalizes Proposition \ref{prop:all_arcs}: if $J$ contains a min cut $S$ then both of the minima in the definitions of $L$ and $U$ are obtained for $S$, and we get $L=U=(T-1)F$, the approximation ratio is 1, in other words it is optimal to schedule all jobs in the same time period. 

Next we present an algorithm for general series-parallel networks, which for the instance used in the proof of Proposition~\ref{prop:single_node} coincides with the well known dynamic programming algorithm for \textsc{Partition}. With feasible values for the binary variables $y_{ai}$ ($a\in J$, $i\in[T]$) we can associate a vector $z^y=(z^y_i)_{i=1,\ldots,T}$ where $z^{y}_i$ denotes the maximum flow in the network with arc set $A\setminus\{a\ :\ y_{ai}=0\}$. By symmetry we may assume that $z^{y}_1\geqslant z^{y}_2\geqslant\cdots\geqslant z^{y}_T$. Our algorithm exploits the fact that many different maintenance schedules $y$ may give rise to the same vector $z^y$ to gain efficiency over naive enumeration of schedules. The algorithm computes the possible vectors $z$ for subnetworks of the network $N$, starting from single arcs. To do this we use \emph{sp-trees} which encode the construction of series-parallel networks. An sp-tree for a series-parallel network $N$ is a full binary tree in which the leaves correspond to the arcs of $N$, any internal node corresponds to the composition of its two child nodes, and the type of composition (series or parallel) is indicated by a node label (`S' or `P', respectively). Figure~\ref{fig:sp_network} shows a network and the corresponding sp-tree.
\begin{figure}[htb]
\begin{minipage}{.48\linewidth}
  \centering
\includegraphics[width=.8\textwidth]{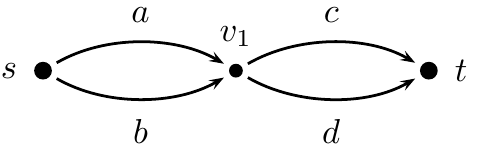}
\end{minipage}\hfill
\begin{minipage}{.48\linewidth}
  \centering
\includegraphics[width=.7\textwidth]{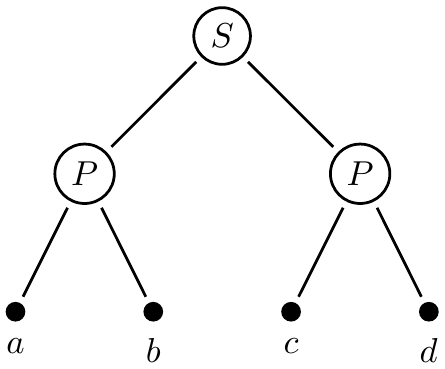}
\end{minipage}
\caption{A series-parallel network and the corresponding sp-tree.}\label{fig:sp_network}
\end{figure}
Recognition of series-parallel networks and construction of an sp-tree can be done in linear time~\cite{valdes1979recognition}. So assume we are given the sp-tree with node set $\mathcal V=\mathcal L\cup\mathcal W$, where $\mathcal L$ is the set of leaves and $\mathcal W$ is the set of internal nodes. The set $\mathcal W$ is partitioned into level sets $\mathcal W_i$ where $\mathcal W_i$ is the set of internal nodes at distance $i$ from the root. Let $d$ be the largest index such that $\mathcal W_d\neq\varnothing$. The lists of possible maximum flow vectors $z$ are initialized at the leaves by assigning a list with a single element to the leaf corresponding to arc $a$. The unique element in this list is $(u_a,u_a,\ldots,u_a,0)$ if $a\in J$ and $(u_a,u_a,\ldots,u_a,u_a)$ if $a\not\in J$. Then the lists for the internal nodes are computed going up in the tree as described in Algorithm~\ref{alg:sp_networks}. The list generated for each node $v\in \mathcal V$ is denoted by $L_v$.
\begin{algorithm}
  \caption{Maximizing total throughput for series-parallel networks}\label{alg:sp_networks}
  \begin{tabbing}
    .....\=.....\=.....\=.....\=.....\=................... \kill \\
\textbf{for} $v\in\mathcal L$ \textbf{do} \\
\> Let $a\in A$ be the arc corresponding to $v$\\
\> \textbf{if} $a\in J$ \textbf{then} $L_v\leftarrow\{(u_a,u_a,\ldots,u_a,0)\}$
\textbf{else} $L_v\leftarrow\{(u_a,u_a,\ldots,u_a,u_a)\}$\\
\textbf{for} $i=d,d-1,\ldots,0$ \textbf{do}\\
\> \textbf{for} $v\in\mathcal W_i$ \textbf{do}\\
\> \> $L_v\leftarrow\{\}$\qquad /* initialize empty list*/\\
\> \> Let $u$ and $w$ be the child nodes of $v$\\
\> \> \textbf{for} each $z\in L_u$, $z'\in L_w$ and $\pi$ a permutation of $\{1,2\ldots,T\}$ \textbf{do}\\
\> \> \> \textbf{if} $v$ is a parallel composition node \textbf{then}\\
\> \> \> \> \textbf{for} $i\in[T]$ \textbf{do} $z''_i=z_i+z'_{\pi(i)}$\\
\> \> \> \textbf{else} /* $v$ is a series composition node */\\
\> \> \> \> \textbf{for} $i\in[T]$ \textbf{do} $z''_i=\min\{z_i,\,z'_{\pi(i)}\}$\\
\> \> \> sort the components of $z''$ in non-increasing order\\
\> \> \> \textbf{if} $z''\not\in L_v$ \textbf{then} add $z''$ to $L_v$\\
Let $v$ be the root node of the sp-tree and return
$\max\limits_{z\in L_v}\sum\limits_{i=1}^Tz_i$
  \end{tabbing}
\end{algorithm}
This algorithm returns the maximum total throughput, and it is easy to see how to keep track of corresponding schedules for all the elements of the lists in the internal nodes.
\begin{example}
Suppose for the network in Figure~\ref{fig:sp_network} the capacities are $u_a=4$, $u_b=1$, $u_c=u_d=2$, the set of arcs with a job is $J=\{a,b,c\}$, and the time horizon is $T=3$. Figure~\ref{fig:algorithm_run} illustrates how the lists for the internal nodes are computed.
\begin{figure}[htb]
  \centering
\includegraphics[width=.63\textwidth]{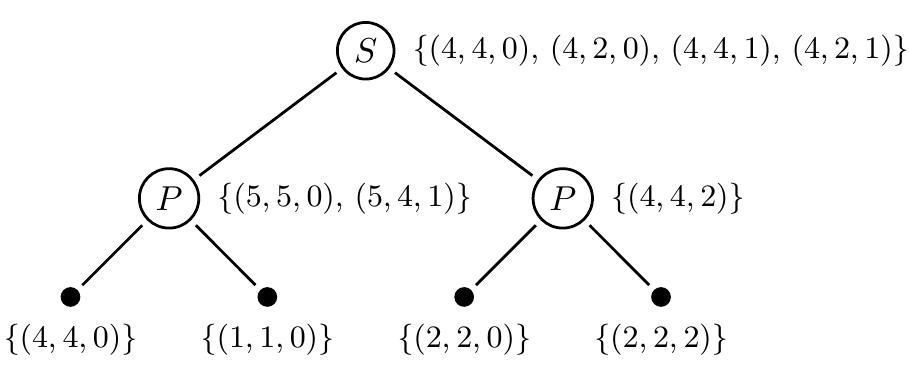}
  \caption{Computation of the possible maximum flow vectors.}
  \label{fig:algorithm_run}
\end{figure}
The optimal vector in the root node is $(4,4,1)$ giving a total throughput of 9, and this can be obtained by scheduling the job on arc $b$ for the second time period and the jobs for arcs $a$ and $c$ for the third time period.
\end{example}
Bounding the runtime of Algorithm \ref{alg:sp_networks} we obtain the following complexity result.
\begin{proposition}\label{prop:series_parallel_complexity}
For series-parallel networks with $m$ arcs the problem can be solved in time 
\[O\left(mT^{T+3/2}e^{-T}\log(T)(mB+1)^{2T}\right),\] 
where $B$ is an upper bound for the capacities.
\end{proposition}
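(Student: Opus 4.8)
The plan is to bound the total running time of Algorithm~\ref{alg:sp_networks} by analyzing three ingredients: the size of each list $L_v$, the cost of processing a single internal node, and the factorial that arises from the permutation loop, for which Stirling's formula produces the unusual factor in the statement. Since the algorithm's correctness has already been argued in the text, it suffices to establish the stated time bound.

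First I would bound the size of each list. Every vector $z$ stored in $L_v$ is a sorted vector of maximum-flow values of the subnetwork associated with $v$; because the capacities are integral and bounded by $B$, and any subnetwork has at most $m$ arcs, each such max-flow value is an integer in $\{0,1,\ldots,mB\}$. Hence every component of a stored vector takes one of at most $mB+1$ values, and since the algorithm keeps only \emph{distinct} vectors, $|L_v|\leqslant (mB+1)^T$ for every node $v$. Next I would bound the work at a single internal node $v$ with children $u$ and $w$. The inner loop ranges over all triples $(z,z',\pi)$ with $z\in L_u$, $z'\in L_w$ and $\pi$ a permutation of $\{1,\ldots,T\}$, so the number of iterations is at most $|L_u|\,|L_w|\,T!\leqslant (mB+1)^{2T}\,T!$. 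In each iteration, forming $z''$ costs $O(T)$, sorting its components costs $O(T\log T)$, and---provided each list is maintained as a hash set (or trie) keyed on the vector---the membership test and possible insertion cost only $O(T)$; thus the per-iteration cost is $O(T\log T)$, and the work at $v$ is $O\bigl((mB+1)^{2T}\,T!\,T\log T\bigr)$.

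Finally I would sum over all nodes and simplify. A full binary sp-tree with $m$ leaves has $m-1$ internal nodes, and initializing the $m$ leaves costs only $O(mT)$, so the total running time is $O\bigl(m\,(mB+1)^{2T}\,T!\,T\log T\bigr)$. Applying Stirling's approximation $T!=O(\sqrt{T}\,T^T e^{-T})=O(T^{T+1/2}e^{-T})$ gives $T!\,T\log T=O(T^{T+3/2}e^{-T}\log T)$, and substituting yields the claimed bound $O\bigl(m\,T^{T+3/2}e^{-T}\log(T)\,(mB+1)^{2T}\bigr)$.

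The main obstacle is the list-size bound: naively the lists could blow up under the permutation loop, and the whole argument hinges on the two observations that each component is a bounded integer and that retaining only distinct sorted vectors caps $|L_v|$ at $(mB+1)^T$. A secondary point requiring care is the membership/insertion step, which must be implemented with an $O(T)$ data structure so that it does not dominate and inflate the exponent from $2T$ to $3T$. Once these are in place, the only remaining subtlety is recognizing that the peculiar factor $T^{T+3/2}e^{-T}\log T$ is precisely $T!\cdot T\log T$ under Stirling's approximation.
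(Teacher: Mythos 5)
Your proposal is correct and follows essentially the same route as the paper's proof: bound each list by $(mB+1)^T$ via the integrality and boundedness of the max-flow values, bound the work at one internal node by $T!\,(mB+1)^{2T}$ iterations at $O(T\log T)$ each (with hash-based membership tests), multiply by the $m-1$ internal nodes, and convert $T!$ into $T^{T+1/2}e^{-T}$ by Stirling's formula. Nothing essential differs from the argument in the paper.
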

\begin{proof}
The entries of the vectors in the lists at the internal nodes are bounded by $mB$, hence every list can contain at most $(mB+1)^T$ elements. Thus the loop over $(z,z')\in L_u\times L_w$ and permutations $\pi$ is over at most $T!(mB+1)^{2T}$ elements. Inside this loop is another one giving an additional factor $T$, and a sorting operation which is at most a factor of $T \log(T)$. With the use of hash tables, checking $z''$ is not already in the list prior to insertion will not worsen the complexity of operations inside this loop, which is thus $T\log(T)$. In total there are $m-1$ internal nodes, thus the runtime is $O(T\log(T)T!(mB+1)^{2T}(m-1))$ from which the result follows when $T!$ is bounded using Stirling's formula.
\end{proof}

We add two remarks on an efficient implementation of Algorithm \ref{alg:sp_networks}.
\begin{enumerate}
\item Any vector $z$ that is dominated by another vector $z'$ in the list, meaning that $z_i\leqslant z'_i$ for all $i\in\{1,2\ldots,T\}$, can be removed immediately.
\item In the loop over $(z,z')\in L_u\times L_w$ and permutations $\pi$ it is necessary to loop over all permutations only if the entries of the vectors $z$ and $z'$ are pairwise distinct. An efficient implementation detects the occurrence of multiple entries and restricts the range of the considered permutations accordingly.
\end{enumerate}

Note that the first remark slightly sharpens the analysis in the proof of Proposition~\ref{prop:series_parallel_complexity}. Since no two vectors $z$ in any list will coincide in all but one entry, the lengths of the lists are bounded by $(mB+1)^{T-1}$. Taking this into account, we obtain the following run-time bound for a fixed time horizon. 
\begin{corollary}\label{cor:series_parallel_fixed_time}
For series-parallel networks and fixed time horizon $T$ the problem can be solved in time $O\left(m^{2T-1}B^{2T-2}\right)$.  
\end{corollary}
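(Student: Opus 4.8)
The plan is to re-run the complexity analysis from the proof of Proposition~\ref{prop:series_parallel_complexity}, but with two improvements: sharpening the bound on the list lengths using the domination rule of the first remark, and exploiting that for fixed $T$ every factor depending only on $T$ (namely $T$, $\log T$, and $T!$) is an absolute constant.

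First I would establish the sharpened list-length bound of $(mB+1)^{T-1}$. By the first remark, after each composition step we discard every vector that is dominated by another in the same list. If two vectors $z$ and $z'$ in a list agreed in all but one of their $T$ coordinates, then in the single coordinate where they differ one value exceeds the other, and since all other coordinates are equal the vector with the larger value dominates the other; hence the dominated one would have been removed. So no two surviving vectors coincide in $T-1$ coordinates. Consequently the map sending a non-increasingly sorted vector to its first $T-1$ coordinates is injective on each list: two distinct vectors with the same first $T-1$ coordinates would differ only in the last coordinate, contradicting the previous sentence. Since each coordinate lies in $\{0,1,\ldots,mB\}$, the number of possible images, and hence the length of each list, is at most $(mB+1)^{T-1}$.

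Next I would substitute this bound into the per-node cost. At an internal node $v$ with children $u,w$, Algorithm~\ref{alg:sp_networks} loops over all pairs $(z,z')\in L_u\times L_w$ and all permutations $\pi$ of $\{1,\ldots,T\}$; this is at most $T!\,(mB+1)^{2T-2}$ iterations, and the work inside each iteration (forming $z''$, sorting it, and a hash-table membership test) is $O(T\log T)$. For fixed $T$ the factors $T!$ and $T\log T$ are constants, so the cost at each internal node is $O\!\left((mB+1)^{2T-2}\right)$. There are $m-1$ internal nodes, giving a total of $O\!\left(m\,(mB+1)^{2T-2}\right)$. Finally, using $mB\geqslant 1$ we have $(mB+1)^{2T-2}\leqslant(2mB)^{2T-2}=O\!\left(m^{2T-2}B^{2T-2}\right)$ for fixed $T$, and multiplying by the factor $m$ yields the claimed bound $O\!\left(m^{2T-1}B^{2T-2}\right)$.

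The only genuinely delicate step is the injectivity argument underlying the $(mB+1)^{T-1}$ bound; everything else is a routine re-accounting of the earlier proof with the $T$-dependent factors absorbed into the constant. I would also take care to note that applying the domination rule does not itself worsen the complexity: maintaining the lists under domination can be interleaved with the generation loop using the same hash-table machinery, so the per-node cost is unaffected.
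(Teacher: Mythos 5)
Your proposal is correct and follows essentially the same route as the paper: the paper derives the corollary from the remark that domination-pruning ensures no two surviving vectors coincide in all but one entry, giving the sharpened list-length bound $(mB+1)^{T-1}$, and then absorbs the $T$-dependent factors into the constant for fixed $T$. Your injectivity argument via projection onto the first $T-1$ sorted coordinates is a clean way of making explicit the counting step that the paper only asserts.
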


\section{Networks with all arcs having unit capacity}\label{sec:unit_cap}
In this section we study the case that the capacity of every arc equals 1. We can aggregate all time periods and solve a standard max flow problem to get an upper bound. The max flow problem is
\begin{align}
  \max\ \sum_{a\in\delta^+(s)}X_{a}&-\sum_{a\in\delta^-(s)}X_{a} \label{eq:objective_agg} \\
\text{s.t.} \sum_{a\in\delta^+(v)}X_{a} &= \sum_{a\in\delta^-(v)}X_{a} && v\in N\setminus\{s,t\},\label{eq:flow_conservation_agg}\\
X_{a} &\leqslant T && a\in A\setminus J, \label{eq:capacity_1_agg} \\
X_{a} &\leqslant T-1 && a\in J, \label{eq:capacity_2_agg} \\
X_{a} &\geqslant 0 && a\in A.\label{eq:nonnegativity_agg}
\end{align}
We will show that this upper bound is actually tight. This follows
by induction once we can find a max flow $X^*$ and cover all the
arcs carrying flow $T$ by a collection of arc disjoint
$s$-$t$-paths. Given any max flow we can reduce the flow along any
cycles carrying flow, and we can remove arcs with zero flow. So in
order to prove that the upper bound is tight it is sufficient to
prove the following result.
\begin{proposition}\label{prop:extract_unit_flows}
Let $(V,A,s,t)$ be an acyclic network with source $s\in V$ and sink $t\in V$, and suppose $X^*:A\to[T]$ satisfies the flow conservation constraints
\[\sum_{a\in\delta^-(v)}X^*_a=\sum_{a\in\delta^+(v)}X^*_a\qquad \text{for all } v\in V\setminus\{s,t\}.\]
Then there is a collection $\mathcal P$ of arc-disjoint $s$-$t$-paths such that $A^*\subseteq\bigcup_{P\in\mathcal P}P$, where $A^*=\{a\in A\ :\ X^*_a=T\}$ is the set of arcs carrying flow $T$.
\end{proposition}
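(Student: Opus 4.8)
The plan is to reduce the whole statement to the existence of a single integral $s$-$t$ flow $f$ taking only the values $0$ and $1$ with $f_a=1$ on every arc of $A^*$. Once such an $f$ is available, the conclusion is immediate from the standard flow decomposition: because the network is acyclic, the support of $f$ contains no directed cycle, so it decomposes into $s$-$t$ paths, and since $f$ is $\{0,1\}$-valued these paths are automatically arc-disjoint. Taking $\mathcal P$ to be this family, we get $\bigcup_{P\in\mathcal P}P=\{a : f_a=1\}\supseteq A^*$, exactly as required. The point worth stressing is that I do \emph{not} need a full decomposition of $X^*$ into $T$ unit flows; a single $0/1$ flow that covers $A^*$ already does the job, because its path decomposition covers all of $A^*$ at once.

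To produce $f$, I would first observe that the scaled vector $g=\tfrac1T X^*$ is itself a feasible (fractional) flow. Conservation at internal nodes is preserved under scaling, and since $X^*_a\leqslant T$ for every arc we have $0\leqslant g_a\leqslant 1$, with $g_a=1$ precisely when $a\in A^*$. Hence $g$ lies in the polytope defined by flow conservation at every internal node together with the bounds $f_a=1$ for $a\in A^*$ and $0\leqslant f_a\leqslant 1$ otherwise. This polytope is bounded and nonempty (it contains $g$), its defining matrix is a submatrix of the node--arc incidence matrix of the network and is therefore totally unimodular, and all the bounds are integral; consequently every vertex is integral. I would then take $f$ to be an integral point of this polytope, which is precisely a $0/1$ flow with $f_a=1$ for all $a\in A^*$.

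The main obstacle is exactly the step the preceding argument resolves. A priori it is not clear that the arcs of $A^*$ can all be threaded onto arc-disjoint $s$-$t$ paths simultaneously, since at a single node several $A^*$-arcs may enter and leave, and one must verify that they can be matched up consistently across the entire network rather than just locally. The fractional flow $g$ supplies exactly the global certificate that no such obstruction exists, and total unimodularity converts this fractional certificate into the required integral flow with no case analysis at individual nodes. For completeness I would also record the one routine fact invoked above, namely that in a directed acyclic graph any nonnegative integral flow obeying conservation at internal nodes decomposes into $s$-$t$ paths covering its entire support, since every positive-flow arc extends forward to $t$ and backward to $s$ within the support and acyclicity guarantees termination.

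As an alternative to the polytope argument one could run a direct induction on $T$, peeling off one such $0/1$ flow at a time so that the maximum arc value drops from $T$ to $T-1$; but each peeling step reduces to the very same covering lemma, so I would favour the total-unimodularity route as the cleanest and least computational.
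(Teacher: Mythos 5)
Your proof is correct, and it reaches the conclusion by a genuinely more direct route than the paper. The paper also works with the polyhedron of fractional $0/1$ circulations satisfying conservation at internal nodes, but it phrases the problem as a linear program maximizing $\sum_{a\in A^*}\xi_a$ subject to $0\leqslant\xi_a\leqslant 1$, invokes integrality of network matrices, and then proves the optimum equals $\lvert A^*\rvert$ by LP duality: it exhibits a dual feasible solution of value $\lvert A^*\rvert$ and bounds every dual objective from below by decomposing $X^*$ into $s$-$t$ paths and summing the dual constraints along each path. You instead force $f_a=1$ on $A^*$ as an explicit constraint, observe that $g=X^*/T$ is a feasible fractional point of the resulting polytope, and let total unimodularity hand you an integral vertex directly; no dual and no path decomposition of $X^*$ are needed, only the trivial scaling. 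Both arguments ultimately rest on the same two ingredients --- the fractional certificate $X^*/T$ and integrality of network-flow polyhedra --- but your organization eliminates the duality bookkeeping, which is arguably what the authors are asking for when they remark in Section~\ref{sec:future} that a less LP-dependent proof would be desirable (though yours still uses polyhedral integrality rather than a purely combinatorial matching argument). One small point to make explicit if you write this up: the polytope is bounded and nonempty, hence has a vertex, and the constraint system consists of a totally unimodular equality block (a submatrix of the incidence matrix) together with integral variable bounds, so that vertex is integral; you say all of this, and it is exactly right.
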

\begin{proof}
We consider the following binary program, in which $(\xi_a)_{a\in
A}$ induces a set of arc-disjoint $s$-$t$-paths:
\begin{align}
  \max\ \sum_{a\in A^*}\xi_a &\label{eq:obj_aux} \\
\text{s.t.}\quad \sum_{a\in\delta^+(v)}\xi_a-\sum_{a\in\delta^-(v)}\xi_a &=0  && v\in V\setminus\{s,t\}, \label{eq:flow_con_aux} \\
\xi_a&\in\{0,1\} && a\in A.\label{eq:domain_aux}
\end{align}
We have to prove that the optimal objective value for the problem
(\ref{eq:obj_aux})~---~(\ref{eq:domain_aux}) is $\lvert A^*\rvert$.
The flow conservation constraints (\ref{eq:flow_con_aux}) form a
network matrix, hence we do not lose anything by relaxing
integrality, i.e. we can replace (\ref{eq:domain_aux}) by
$0\leqslant \xi_a\leqslant 1$ for all $a\in A$. The dual problem can
be written in the form
\begin{align}
  \min\ \sum_{a\in A}\eta_a& \label{obj_dual_aux} \\
\text{s.t.}\quad\pi_{v}-\pi_w+\eta_a  &\geqslant 0 && a=(v,w)\in A\setminus A^*, \label{eq:pot_diff_1} \\
 \pi_{v}-\pi_w+\eta_a  &\geqslant 1 && a=(v,w)\in A^*, \label{eq:pot_diff_2} \\
\pi_s=\pi_t&=0, \label{eq:boundaries} \\
\eta_a &\geqslant 0 && a\in A. \label{eq:nonnegativity}
\end{align}
A feasible solution with objective value $\lvert A^*\rvert$ is given by $\pi_v=0$ for all $v\in V$, $\eta_a=0$ for $a\in A\setminus A^*$, and $\eta_a=1$ for $a\in A^*$. In order to prove our claim we have to check that $\lvert A^*\rvert$ is a lower bound, i.e. that $\sum_{a\in A}\eta_a\geqslant\lvert A^*\rvert$ for every feasible solution. To see this let $\mathcal P'$ be any decomposition of the flow $X^*$ into paths, that is a collection of $s$-$t$-paths such that every arc $a$ is contained in exactly $X^*_a$ paths $P\in\mathcal P'$. Adding up constraints (\ref{eq:pot_diff_1}) and (\ref{eq:pot_diff_2}) over the arcs of any path $P\in\mathcal P'$, we obtain $\sum\limits_{a\in P}\eta_a\geqslant\lvert P\cap A^*\rvert$, hence
\[\sum_{a\in A} X^*_a\eta_a=\sum_{P\in\mathcal P'}\sum_{a\in P}\eta_a
  \geqslant\sum_{P\in\mathcal  P'}\lvert P\cap A^*\rvert=T\lvert A^*\rvert.\]
Finally, using $X^*_a\leqslant T$ for all $a\in A$,
\[\sum_{a\in A}\eta_a\geqslant\sum_{a\in A}\frac{X^*_a}{T}\eta_a\geqslant\lvert A^*\rvert.\qedhere\]
\end{proof}
To summarize, if $u_a=1$ for all $a\in A$ then the problem can be reduced to solving the max flow problem (\ref{eq:objective_agg})~--~(\ref{eq:nonnegativity_agg}) followed by $T$ instances of (the linear relaxation of) the problem (\ref{eq:obj_aux})~--~(\ref{eq:domain_aux}). Consequently, these instances can be solved in time polynomial in the size of the network and the time horizon $T$.
\begin{remark}
It is straightforward to generalize the result of this section to the problem where every arc can have several unit processing time jobs which must not overlap. The only necessary modification in this case is to replace the right-hand side of constraint (\ref{eq:capacity_2_agg}) by $T-m_a$ where $m_a$ is the number of jobs that have to be scheduled on arc $a$.
\end{remark}

\section{Future Work}\label{sec:future}

It would be interesting to find a more combinatorial proof of Proposition \ref{prop:extract_unit_flows}, rather than resorting to solution of linear programs. A combinatorial proof may suggest combinatorial algorithms for constructing the arc-disjoint paths that cover all arcs with flow $T$. Practical algorithms for more general problems are also of interest.

\end{document}